\newtheorem{defn0}{Definition}[section]
\newtheorem{prop0}[defn0]{Proposition}
\newtheorem{conj0}[defn0]{Conjecture}
\newtheorem{thm0}[defn0]{Theorem}
\newtheorem{lem0}[defn0]{Lemma}
\newtheorem{corollary0}[defn0]{Corollary}
\newtheorem{example0}[defn0]{Example}
\newtheorem{remark0}[defn0]{Remark}
\newtheorem{que0}[defn0]{Question}
\newenvironment{defn}{\begin{defn0}}{\end{defn0}}
\newenvironment{prop}{\begin{prop0}}{\end{prop0}}
\newenvironment{thm}{\begin{thm0}}{\end{thm0}}
\newenvironment{cor}{\begin{corollary0}}{\end{corollary0}}
\newenvironment{exm}{\begin{example0}\rm}{\end{example0}}
\newcommand{\Z}{{\mathbb{Z}}}
\newcommand{\ZZ}{{\mathbb{Z}}}
\newcommand{\Fqstar}{\F_q^{\,*}}
\newcommand{\F}{{\mathbb F}}
\def\F{\mathbb{F}}
\def\ZZ{\mathbb{Z}}
\def\RR{\mathbb{R}}
\def\Fq{\F_q}
\def\Fqstar{\Fq^\ast}
\def\k.{\mathcal{K}_{\bullet}}
\numberwithin{equation}{section}
\begin{document}
\title[Toric Codes and Finite Geometries]%
{Toric Codes and Finite Geometries} 

\author{John B. Little}
\address{Department of Mathematics and Computer Science\\
College of the Holy Cross, Worcester, MA 01610}
\email{jlittle@holycross.edu}

\subjclass[2001]{Primary 94B27; Secondary 51C05}
\keywords{Coding theory, finite geometry, ring geometry}

\begin{abstract}
The theory of affine geometries over the rings $\ZZ/\langle q - 1\rangle$
can be used to understand the properties of toric and generalized toric codes over
$\F_q$.  The standard generator matrices of these codes are produced by evaluating 
collections of monomials in $m$ variables 
at the points of the algebraic torus $(\Fqstar)^m$.  The exponent
vector of such a monomial can be viewed as a point in one of these affine geometries 
and the minimum distance of the resulting code is strongly tied to the lines in the finite 
geometry that contain those points.  We argue that this connection is, in fact, even 
more direct than the connection with the lattice geometry of those exponent vectors 
considered as elements of $\ZZ^2$
or $\RR^2$.  This point of view should be useful both as a way to visualize
properties of these codes and as a guide to heuristic searches
for good codes constructed in this fashion. 
 In particular, we will use these ideas to see a reason why these 
constructions have been so successful over the field $\F_8$, but less
successful in other cases. 

This corrected version (posted on 3/6/2017) adds a necessary hypothesis in Proposition~\ref{index}.
A counterexample to the previous version was pointed out to the author by 
Melda G\"or\"ur and the author thanks her for bringing that to his attention.
Theorem~\ref{binomfact} and Corollary~\ref{distbd} are also reformulated so they 
do not refer to the new more restricted Proposition~\ref{index}.
\end{abstract}

\maketitle

\section{Introduction}
We will consider a particular construction of linear block codes over a 
finite field $\Fq$.  Mathematically, our codes are simply vector subspaces 
$C \subset \Fq^n$ whose elements serve as a set of codewords for representing
information.  This sort of encoding is done to increase the reliability of 
communication over noisy channels and has a number of engineering applications.  
 Our standard reference for basic notions 
and notation in coding theory is \cite{hp}.  As usual, $n$ always denotes
the block length and $k$ denotes the vector space dimension
$\dim_{\Fq} C$, so that the set of codewords contains $q^k$ elements.  
The important parameters of a code are $n,k$ and
a third integer $d$ called the minimum Hamming distance.  For these linear
codes, 
$$d = \min_{x\ne 0 \in C} |\{i \mid x_i \ne 0\}|.$$
If we fix $n$ and $k$, the larger the parameter $d$ is, the larger the error 
detection and error correction capacity of a code is.  

The toric codes studied here are a class of $m$-dimensional cyclic codes 
introduced by J. Hansen in \cite{h1}, \cite{h2}.  (The term ``toric code''
is also used in another context that has no direct connection with this one.)  
Hansen uses the geometry of the projective toric variety corresponding
to a polytope $P$ in $\RR^m$ to describe toric codes, but these may also be understood
in a somewhat more concrete way within the general context of evaluation, or 
functional, codes.

\begin{defn}
Let $P$ be the convex hull of a finite set of integer lattice points, 
contained in $[0,q-2]^m \subset \RR^m$ and let 
$L = \mbox{\rm Span}\{x^e : e \in P\cap \Z^m\}$ be the $\Fq$-linear 
span of the monomials $x^e$ in variables $x_1,\ldots,x_m$ corresponding
to the lattice points $e$ in $P$.  The linear block code 
denoted by $C_P(\Fq)$ is the image of the evaluation mapping on 
the $\Fq$-rational points in the standard $m$-dimensional torus over $\Fq$:
\begin{eqnarray*} \mbox{\rm ev} : L & \to & \Fq^{\,(q-1)^m}\\
                                  g & \mapsto & (g(p) : p \in (\Fqstar)^m).
\end{eqnarray*}
\end{defn}

The condition that $P\subset [0,q-2]^m$ implies that the $x^e$ are linearly 
independent as functions on $(\Fqstar)^m$.
In terms of generator matrices, this construction can also be described
as follows.  Let $\alpha$ be a primitive element for $\Fq$. 
If $f \in \Z^m$ is a vector with $0\le f_i \le q-2$
for all $i$, let $p_f$ denote the point $p_f = (\alpha^{f_1},\ldots,\alpha^{f_m})$ 
in $(\Fqstar)^m$.  If $e = (e_1,\ldots,e_m) \in P \cap \Z^m$, 
write
$$(p_f)^e = (\alpha^{f_1})^{e_1}\cdots (\alpha^{f_m})^{e_m} = \alpha^{\langle f,e\rangle}.$$
Then the standard generator matrix for $C_P(\Fq)$ is the 
$(\dim_{\Fq} L) \times (q - 1)^m$ matrix
$$G = ((p_f)^e),$$
whose rows are indexed by $e \in P \cap \Z^m$, and whose columns are
indexed by $f$ or $p_f \in (\Fqstar)^m$.
We note that if $P$ is the interval $[0,\ell-1] \subset \RR$, then $C_P(\Fq)$ is 
simply the Reed-Solomon code $RS(\ell,q)$.  So toric codes are, in a sense,
higher-dimensional generalizations of Reed-Solomon codes.

In applying these ideas, it has turned out to be worthwhile to generalize
this construction slightly, using arbitrary sets 
$S \subset [0,q-2]^m \subset \RR^m$ instead of the whole
set of lattice points in a convex polytope.  These codes will be
denoted by the analogous notation $C_S(\Fq)$.  If $P = {\rm conv}(S)$, then
the code $C_S(\Fq)$ is a subcode of $C_P(\Fq)$.  In the algebraic geometric
language used by Hansen, the $C_S(\Fq)$ codes can be defined using 
incomplete linear systems $V \subset |\mathcal{O}_{X_P}(D_P)|$, where
$X_P$ is the toric variety determined by $P$ and $D_P$ is the
corresponding divisor class on $X_P$.

The survey \cite{mr} covers most of the work on these codes contained
in \cite{j}, \cite{ls}, \cite{r1}, \cite{r2}, \cite{ss}, and \cite{br}.

Toric codes or generalized toric codes are not all as good as Reed-Solomon codes
from the coding theory perspective, but there are some very good codes first found
by this construction.  For instance, \cite{bk} gives a number of codes
over $\F_8$ found by this method that are better than any previously known examples.  

In many of the works cited above, the main focus has been on identifying conditions
on $P$ or on $S$ that imply results about the minimum distance of the 
corresponding codes using the geometry of $P \cap \ZZ^m$ or $S$ 
\emph{as subsets of the integer lattice} $\ZZ^m \subset \RR^m$. 
In particular, the role of Minkowski sum decompositions of subpolytopes
of $P$ and factorizations of the sections of the corresponding line bundle
on the toric surface $X_P$ has been studied rather intensively in \cite{ls}, 
\cite{ss}, and \cite{l}.  

In this note we will describe and exploit a somewhat different point of view. 
We relate the properties of the $C_S(\Fq)$ codes
to properties of the images of the sets $S$ in the finite 
$m$-dimensional affine ring geometry over $\ZZ/\langle q - 1\rangle$, obtained 
by simply reading the exponent vectors $e$ above as elements of 
$(\ZZ/\langle q - 1\rangle)^m$.  The results here are, in a way, complementary
to those from \cite{l}, where we compared the properties of $C_S(\Fq)$ and the  
related code $C_P(\Fq)$ for $P = {\rm conv}(S)$ and $q$ sufficiently large.
Here the focus will be on the special properties of certain $S$ for specific $q$.

We will concentrate mainly on the case $m = 2$ for
simplicity, although the extension to larger $m$ is essentially immediate.
By itself, this amounts mostly to a relatively 
simple translation of known algebraic facts into another sort of geometric language
with some unusual properties.  However, we will argue that this 
alternative point of view is, if anything, even more natural and direct than
studying toric codes via properties of polytopes and integer lattice vectors
in $\ZZ^2$.  Moreover, this approach 
should prove useful both for visualizing how $d$ is 
determined by the properties of $S$ and hence for heuristic searches for codes with 
good $d$.  

We will recall the known 
properties of ring geometries in \S 2.  We will then apply these properties to 
the study of generalized toric codes in \S 3.   In particular, we will see
a very concrete explanation for why $\F_8$ appears to be a particularly
favorable choice of base field, and for why the construction succeeds 
so well there, yet performs relatively poorly over other fields of comparable 
small size.  Finally, in \S 4, we will offer some more speculative comments
about the potential of this code construction and an indication of which 
other finite fields should have properties analogous to those of $\F_8$.    

\vskip 10pt
\noindent{\bf Acknowledgments.} 
This ideas in this long-gestating paper are indirect 
outgrowths of results obtained in several undergraduate research 
projects carried out under the supervision of the author 
by Ryan Schwarz, Alex Simao, and Lauren Buckley 
at the College of the Holy Cross, and by Alejandro Carbonara, Juan Murillo 
and Abner Ortiz at the MSRI-UP 2009 program.  It is a pleasure to acknowledge 
their contributions to the development of the author's thinking.  Computations 
were done using procedures originally written by D. Joyner in {\tt Magma}, 
a symbolic computation package available at
{\tt http://magma.maths.usyd.edu.au/magma/}.

\section{Finite Ring Geometries}

The properties of the finite affine and projective geometries over a finite field
are very well-known and, of course, form the basis for algebraic geometry over
finite fields and many different sorts of applications to coding theory.
Perhaps less well-known to many mathematicians not working in the area
is that there is also a quite well-developed theory of affine and projective coordinate 
geometries over rings.  We will only need the following relatively simple case discussed
in \cite{le} and called affine \emph{Barbilian planes} there.  These are geometries
with more of the ``usual properties'' one expects from the geometry of the Euclidean
plane than the even more general structures called Hjelmslev planes.

Let $R$ be a ring with multiplicative identity $1$ in which $a\cdot b = 1$ implies 
$b\cdot a = 1$.  Examples include commutative rings with identity as well as 
various noncommutative rings such as matrix rings over a field.  Let $B$ be a subset of 
$R^2 = R \times R$ that satisfies 

\begin{itemize}
\item[$(E_1)$] $(1,0), (0,1) \in B$, 
\item[$(E_2)$] If $(u,v) \in B$ and $r$ is a unit in $R$, then $r(u,v) = (ru,rv) \in B$,
\item[$(E_3)$] Every $(u,v) \in B$ can be completed to an invertible $2\times 2$
matrix $\begin{pmatrix} u & v\\ s & t\end{pmatrix}$ with $(s,t) \in B$, 
\item[$(E_4)$]  If $\begin{pmatrix} u & v\\ s & t\end{pmatrix}$ is an invertible
$2 \times 2$ matrix with $(u,v), (s,t) \in B$, then $(u,v) + \ell (s,t) \in B$
for all $\ell \in R$.  
\end{itemize}

It is easy to see that if $(u,v) \in B$ there must be $s,t \in R$ such that
$su + tv = 1$ and this shows that in the cases we will consider, there is only
one choice for $B$, namely \emph{the set of all $(u,v)$ appearing as rows 
in $2\times 2$ invertible matrices with entries in $R$}.  From now on, 
$B$ will refer to this set; we will not include any indication of the 
ring $R$ in the notation, though, since that should always be clear from
the context. 

One can define 
a geometric structure 
$$\mathcal{G} = (\mathcal{P}, \mathcal{L}, \not\!\circ, \|)$$ 
associated to $R$ as follows:

\begin{itemize}
\item  $\mathcal{P}$, called the set of \emph{points}, is simply $R^2$.
\item  The subsets of $\mathcal{P}$ of the form 
$$(a,b) + R(u,v) = \{(a + \ell u, b + \ell v) \mid \ell \in R\}$$
\emph{with $(u,v) \in B$}
are called \emph{lines} and $\mathcal{L}$ is the collection of all such lines. 
\item  Two points $(a,b)$ and $(c,d)$ are said to be \emph{non-neighbors}, 
written 
$$(a,b) \not\!\circ\,\, (c,d),$$ 
if $(a - c, b - d) \in B$.  If this does not hold, we write $(a,b) \circ (c,d)$
and say that the two points are \emph{neighbors}.
\item  Two lines $\ell_1 = (a,b) + R(u,v)$ and $\ell_2 = (c,d) + R(s,t)$ are said to be 
\emph{parallel} if and only if $R(u,v) = R(s,t)$.  We write $\ell_1 \| \ell_2$
if this is true.  Parallelism is an equivalence relation on the set 
$\mathcal{L}$.  
\end{itemize}
For simplicity, we will call $\mathcal{G}$ the \emph{affine plane over} $R$.

The lines in the affine plane $\mathcal{G}$ have a familiar-looking parametric form and
the points on a line are in one-to-one correspondence with the elements of $R$
because it is required that $(u,v) \in B$.  Two non-neighbor points are
contained in a unique line and parallel lines either coincide or are disjoint.   But 
it is also possible for two distinct neighbor points to be contained in 
\emph{more than one line}, and similarly, it is possible for two non-parallel lines
to intersect in \emph{more than one point}.  

The exact properties of the geometries obtained by this construction are
captured by the list of six axioms from \cite{le} part I, defining the affine
Barbilian planes.  In addition to the properties already mentioned, 
there is a nice analog of the Playfair form of the Euclidean 
Parallel Postulate that holds here.  We will not list all of these properties
because we will not need to make use of them in the following.

On the other hand, \cite{le} part II also contains a number
of results characterizing special properties of these ring geometries
corresponding to some standard ring-theoretic properties of $R$.  For instance, 
we will need the following statements.

\begin{thm}[\cite{le}, part II]
\label{ringprops}  
Let $R$ be a ring with identity with the property 
that $a\cdot b = 1$ implies $b \cdot a = 1$ and let $\mathcal{G}$
be the corresponding affine plane over $R$.  Then
\begin{enumerate}
\item[(1)] The geometry satisfies the analog of Pappus's theorem on triples of points
on two distinct lines if and only if $R$ is commutative
\item[(2)] The neighbor relation on the set of points is transitive if and only if
$R$ is a local ring and $B$ is the set of pairs $(u,v)$ where at least one 
of $u,v$ is a unit in $R$.
\item[(3)] There is at most one line containing any pair of distinct points 
if and only if $R$ has no zero divisors.
\item[(4)] The following are equivalent:
\begin{enumerate}
\item Every pair of distinct points on a line are non-neighbors.
\item Every pair of distinct points are non-neighbors.
\item $R$ is a field (not necessarily commutative).
\item The affine plane satisfies the analog of Desargues' theorem.
\end{enumerate}
\end{enumerate}
\end{thm}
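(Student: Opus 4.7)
The plan is to translate each geometric condition into an algebraic condition on $R$, working throughout with the characterization noted just before the statement: $(u, v) \in B$ precisely when there exist $s, t \in R$ with $su + tv = 1$. The hypothesis $ab = 1 \Rightarrow ba = 1$ ensures that one-sided inverses are two-sided, so I can speak of ``units'' without ambiguity.

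For (2), I would first observe that transitivity of the neighbor relation $\circ$ is equivalent to closure of the set $N = R^2 \setminus B$ under addition (from $p \circ q$ and $q \circ r$ with $p - r = (p - q) + (q - r)$). Restricting to pairs of the form $(u, 0)$, one has $(u, 0) \in B$ iff $u$ is a unit, so additive closure of $N$ forces the non-units of $R$ to be an additive subgroup. Combined with the observation that $ru$ cannot be a unit when $u$ is not (else $sru = 1$ makes $u$ itself a unit), this shows the non-units form the unique maximal two-sided ideal $\mm$, so $R$ is local. One then verifies $B = \{(u, v) : u \text{ or } v \text{ is a unit}\}$: if both coordinates were in $\mm$, then $su + tv \in \mm$ could not equal $1$. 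The reverse implication is immediate since $\mm \times \mm$ is closed under addition.

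For (3), the forward direction is an explicit construction: if $ab = 0$ with $a, b \neq 0$, then $(1, 0)$ and $(1, b)$ both lie in $B$ and generate distinct lines through $(0, 0)$ (they are not unit-associate, since $b \neq 0$), yet both contain $(a, 0) = a(1, 0) = a(1, b)$. For the converse, assume $R$ has no zero divisors and that two distinct points both lie on two distinct lines; after translation these are $(0,0)$ and a nonzero $(a, b) = r_1(u_1, v_1) = r_2(u_2, v_2)$ with $r_1, r_2 \neq 0$. Completing each $(u_i, v_i)$ to an invertible matrix $M_i$, the identity $(r_1, 0) = (r_2, 0) M_2 M_1^{-1}$ shows that the $(1,2)$-entry of $M_2 M_1^{-1}$ annihilates $r_2$; the no-zero-divisor hypothesis forces that entry to be zero, whence $(u_2, v_2)$ is a left scalar multiple of $(u_1, v_1)$, contradicting the distinctness of the two lines.

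For (4), the implications (c)$\Rightarrow$(b)$\Rightarrow$(a) are essentially tautological, since in a skew field every nonzero pair of $R^2$ is automatically in $B$. For (a)$\Rightarrow$(c), observe that distinct points $(r, 0)$ and $(r', 0)$ on the line $R(1, 0)$ are non-neighbors only if $(r - r', 0) \in B$, i.e., $r - r'$ is a unit; taking $r' = 0$ shows every nonzero element of $R$ is a unit. The equivalence with Desargues' theorem (d), together with the converse of (1), is the hardest part and is the main obstacle: these are classical coordinatization results from the foundations of projective geometry. In each case one sets up a specific Desargues or Pappus configuration, writes out the coordinates of the relevant intersections, and extracts, respectively, the existence of multiplicative inverses or the identity $ab = ba$; in the forward direction, a direct determinantal computation suffices. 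I would follow the strategy in Artin's \emph{Geometric Algebra}, taking care that all line directions appearing in the configurations remain inside $B$ so that the arguments transfer from the division-ring setting to the more general ring setting covered here.
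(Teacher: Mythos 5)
First, a structural point: the paper offers no proof of this theorem at all. It is imported verbatim from Leissner's \emph{Affine Barbilian Planes, II} and used as a black box, so there is no internal argument to compare yours against. Judged on its own terms, your proposal correctly disposes of the elementary parts. The reduction of transitivity of $\circ$ to additive closure of $R^2\setminus B$, the identification of the non-units as a two-sided ideal, and the resulting description of $B$ in the local case are all sound; so are the zero-divisor construction $(a,0)=a(1,0)=a(1,b)$ for part (3) and the chain (c)$\Rightarrow$(b)$\Rightarrow$(a)$\Rightarrow$(c) in part (4). One small repair in your converse for (3): from $(u_2,v_2)=p(u_1,v_1)$ you only get $R(u_2,v_2)\subseteq R(u_1,v_1)$; to conclude that the two lines actually coincide (and hence reach your contradiction) you should note that $p$ is forced to be a unit --- it is the $(1,1)$ entry of the invertible matrix $M_2M_1^{-1}$ whose $(1,2)$ entry you have just shown to vanish --- or else run the argument symmetrically to get the reverse inclusion.

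The genuine gap is in parts (1) and (4)(d), which you explicitly defer to ``the strategy in Artin's \emph{Geometric Algebra}.'' Those configurational characterizations are not routine transfers from the division-ring case; they are the actual content of Leissner's paper, which is why the author cites it rather than sketching an argument. In Artin's setting every nonzero vector is a legitimate line direction and any two distinct points span a unique line, so the Desargues and Pappus configurations can be drawn freely and their auxiliary intersection points always exist and are unique. In a Barbilian plane over a ring with zero divisors or a nontrivial neighbor relation, the intersections appearing in those configurations need not exist or be unique, and the ``analogs'' of Desargues and Pappus must first be formulated with non-neighbor and non-parallelism hypotheses before any coordinate computation can begin; extracting commutativity or invertibility from the configuration then requires checking that the points and directions one constructs stay inside $B$. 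Your proposal acknowledges this obstacle in its last sentence but does not carry it out, and that is precisely where the proof lives. Since the paper only ever applies the theorem with $R=\ZZ/\langle q-1\rangle$, where (1) is automatic and (4) reduces to the question of whether $q-1$ is prime, the gap does not affect anything downstream --- but as a proof of the stated theorem the proposal is incomplete on exactly the two items that required Leissner's work.
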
 

We will want to make use of this construction in the particular case 
$R = \ZZ/\langle r\rangle$ for some integer $r > 1$.   From Theorem~\ref{ringprops}, 
we easily derive the following statements.

\begin{cor}
Let $R = \ZZ/\langle r\rangle$ and $\mathcal{G}$ be the affine plane 
over $R$.  In Theorem~\ref{ringprops}, in this case:
\begin{itemize}
\item[(a)] The analog of Pappus's theorem in (1) always holds.
\item[(b)] The statements in part (2) hold if and only if $r$ is a \emph{prime power}.  
\item[(c)] The statements in parts (3) and (4) hold if and only if 
$r$ is \emph{prime} and $\mathcal{G}$ is the affine plane over a field.
\end{itemize}
\end{cor}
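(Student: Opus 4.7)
The plan is to derive the corollary as a direct translation of Theorem~\ref{ringprops} into the special setting $R = \ZZ/\langle r\rangle$, by checking each ring-theoretic hypothesis in that theorem against the elementary structure of $\ZZ/\langle r\rangle$. Nothing genuinely new is being proved — the work lies in matching conditions.

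For part (a), I would simply observe that $\ZZ/\langle r\rangle$ is commutative for every $r > 1$, so item (1) of Theorem~\ref{ringprops} applies unconditionally and the Pappus analog holds in $\mathcal{G}$. For part (c), the ring $\ZZ/\langle r\rangle$ has no zero divisors if and only if $r$ is prime: if $r = ab$ is a nontrivial factorization then $\bar a\cdot\bar b = 0$ exhibits zero divisors, and if $r$ is prime the ring is a field $\F_r$. Item (3) of Theorem~\ref{ringprops} then gives the line-uniqueness condition exactly when $r$ is prime, and in that case all four equivalent conditions of item (4) hold because $R$ is a field; conversely, any of the conditions in item (4) forces $R$ to be a (skew) field, which for $\ZZ/\langle r\rangle$ again forces $r$ prime.

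For part (b), the main step is to show that $\ZZ/\langle r\rangle$ is local if and only if $r$ is a prime power. I would invoke the Chinese Remainder Theorem to write $\ZZ/\langle r\rangle \cong \prod_i \ZZ/\langle p_i^{a_i}\rangle$ when $r = \prod_i p_i^{a_i}$; a finite product of nonzero rings is local only when there is a single factor, and each $\ZZ/\langle p^a\rangle$ is indeed local with maximal ideal $\langle p\rangle$. The second clause of item (2) of Theorem~\ref{ringprops}, concerning $B$, then needs to be verified in the local case: a pair $(u,v)$ lies in $B$ iff $\langle u,v\rangle = R$ (since $(u,v)$ can be completed to an invertible matrix iff $su + tv = 1$ has a solution), and in a local ring $\langle u,v\rangle = R$ iff at least one of $u,v$ lies outside the unique maximal ideal, i.e.\ is a unit. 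This verification is the only mildly substantive step in the whole corollary.

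The main (and only modest) obstacle is the local-ring verification just described; everything else is a bookkeeping consequence of Theorem~\ref{ringprops} together with the elementary ideal structure of $\ZZ/\langle r\rangle$. Since Theorem~\ref{ringprops} is quoted from \cite{le}, no further work beyond these identifications is required.
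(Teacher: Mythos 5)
Your proposal is correct and matches the paper's (implicit) argument: the paper omits the proof entirely, calling the corollary an easy derivation from Theorem~\ref{ringprops}, and the facts you supply --- commutativity of $\ZZ/\langle r\rangle$, locality iff $r$ is a prime power via the Chinese Remainder Theorem, absence of zero divisors iff $r$ is prime, and the identification of $B$ with the pairs generating the unit ideal --- are exactly the standard identifications being relied upon. The one step you rightly flag as mildly substantive, that in a local ring $(u,v)\in B$ iff one of $u,v$ is a unit, is consistent with the paper's earlier remark that $(u,v)\in B$ forces $su+tv=1$ for some $s,t$.
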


\begin{exm}
\label{Z8Plane}
Consider the affine plane over $R = \ZZ/\langle 8\rangle$.  We will represent
elements of $R$ by the smallest nonnegative elements of the corresponding
congruence classes.  The set $B$
consists of vectors $(u,v)$ where either $u$ or $v$ is a \emph{unit mod} $8$, hence
equals $1,3,5$, or $7$.  Then for instance $P = (0,0)$ and $Q = (1,4)$ satisfy
$P \not\!\circ\,\, Q$
since $P - Q = (7,4) \in B$.  $P$ and $Q$ are contained in exactly one line:
$$(0,0) + R (1,4) = \{(0,0),(1,4),(2,0),(3,4),(4,0),(5,4),(6,0),(7,4)\}.$$
From this list of points, we can see already that the affine plane over $R$
has some unusual properties.  For example, note that $(2,0)$ also 
lies on the line 
$$(0,0) + R (1,0).$$
So $(0,0) \circ (2,0)$ and these are examples of neighbors lying on 2 distinct lines.  
Similarly the points $(0,0) \circ (4,0)$
are neighbors and they actually both lie on \emph{four} distinct lines:
$$(0,0) + R(1,0), (0,0) + R(1,2), (0,0) + R(1,4), (0,0) + R(1,6).$$  
The set of all neighbors of $(0,0)$ is 
$$\{(a,b) \mid a,b \in \{0,2,4,6\}\}.$$
The neighbor relation is transitive in this case since $R$ is a local ring
with unique maximal ideal $\langle 2\rangle R$ (as in part (2) of Theorem~\ref{ringprops}).
$\diamondsuit$
\end{exm}

The properties seen in this example generalize immediately.

\begin{prop}
\label{index}
Let $r$ be a prime power.  
Let $\mathcal{G}$ be the affine plane over $R = \ZZ/\langle r\rangle$ and 
let $(a,b) \circ (c,d)$ be distinct neighboring points.  The number of distinct
lines containing both points is equal to $r/o((a-c,b-d))$, where 
$o((a-c,b-d))$ is the order of the element $(a-c,b-d)$ in the
additive group $R^2$.   
\end{prop}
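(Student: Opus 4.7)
The plan is to translate the configuration to the origin and then exploit a $GL_2(R)$-normalization of $\delta := (c-a,d-b)$. Subtracting $(a,b)$ gives a bijection between lines through $(a,b)$ and $(c,d)$ and lines through $0$ containing $\delta$. By the Corollary, since $r = p^k$ is a prime power, $R$ is local with maximal ideal $\langle p\rangle$ and $B$ consists of those $(u,v)$ with at least one coordinate a unit, i.e., not divisible by $p$. Lifting $\delta$ to $\ZZ^2$ and setting $g := \gcd(p^k, c-a, d-b)$, an elementary calculation of additive orders in $(\ZZ/p^k)^2$ gives $o(\delta) = p^k/g$. Writing $g = p^j$, the hypotheses that $\delta$ is nonzero and non-unimodular force $1 \le j < k$, so we may factor $\delta = p^j(e_1, e_2)$ with $(e_1, e_2) \in B$.

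Next I would normalize $\delta$ via the right action of $GL_2(R)$ on row vectors in $R^2$. Axiom $(E_3)$ yields an invertible matrix $N$ whose first row is $(e_1,e_2)$, and right-multiplication by $N^{-1}$ sends $\delta$ to $(p^j,0)$. Since $GL_2(R)$ preserves unimodularity, it preserves $B$ and hence permutes the lines through the origin; therefore the number of such lines containing $\delta$ equals the number containing $(p^j,0)$.

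Finally I would enumerate the lines through the origin containing $(p^j,0)$ directly. If $R(u,v)$ is such a line, say $(p^j,0) = \ell(u,v)$ for some $\ell \in R$, the case where $u$ is a non-unit and $v$ is a unit is impossible: it would force $\ell = 0$ and hence $\ell u = 0 \ne p^j$ (using $j < k$). So $u$ is a unit, giving $\ell = p^j u^{-1}$ and $v \in p^{k-j} R$. Rescaling by the unit $u^{-1}$, every such line has a representative of the form $(1, p^{k-j} w)$ with $w \in R$, and two such representatives yield the same line iff $w \equiv w' \pmod{p^j}$. Thus the lines are in bijection with $R/\langle p^j\rangle$, giving exactly $p^j = g = r/o(\delta)$ lines, as claimed. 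The main technical point to verify is the $GL_2(R)$-invariance of $B$; once that is in hand, the remaining arguments are direct.
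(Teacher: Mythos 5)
Your proposal is correct and follows essentially the same route as the paper's proof: translate to the origin, use an invertible change of coordinates to normalize the difference vector to a multiple of $(1,0)$, invoke the local-ring description of $B$ to see that the first coordinate of any admissible direction vector must be a unit, rescale to $(1,v')$, and count the admissible $v'$ as the annihilator of $\ell$ (the index of $\langle \ell\rangle$, i.e.\ $r/o(\ell)$). Your version is slightly more explicit in pinning $\ell$ down as $p^j$ and in checking that distinct $v'$ give distinct lines, but the underlying argument is the same.
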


\begin{proof}
Without loss of generality, we reduce to the case $(c,d) = (0,0)$.  Let
$(a,b)$ be any point contained in a line $R (s,t)$ and change
coordinates by an invertible $2\times 2$ matrix with entries in $R$ to 
map $(s,t)$ to $(1,0)$, hence mapping $(a,b)$ to 
$\ell (1,0) = (\ell, 0)$ for some $\ell \in R$.
Then for each line containing $(0,0)$ and $(\ell,0)$, 
there is a direction vector $(u,v) \in B$ and $\ell'\in R$ such that 
\begin{equation}
\label{lineeq}
(\ell,0) = \ell'(u,v).
\end{equation}
Note that \eqref{lineeq}  implies $v$ cannot
be a unit in $R$.  Hence by part (2) of Theorem~\ref{ringprops}, 
$u$ must be a unit (since the vector $(u,v) \in B$ 
by definition) and we can replace
$(u,v)$ by another direction vector for the same line having 
the form $(1,v')$.  Then there must be an equation similar
to \eqref{lineeq} with $(u,v)$ replaced by $(1,v')$.  
Then the scalar multiple $\ell' (1,v')$ giving $(\ell,0)$ must have 
$\ell' = \ell$ and $\ell v' = 0$.
Moreover, there is a one-to-one correspondence between the lines containing
$(0,0)$ and $(\ell, 0)$ and solutions of the equation $\ell v' = 0$ in $R$. 
The number of solutions of this equation is equal to the 
index of the subgroup $\langle \ell \rangle \subseteq R$, which
is equal to $r/o(\ell)$.  This establishes the claim.     
\end{proof}

\section{Ring Geometries and Generalized Toric Codes}

We will now consider how the finite ring geometries introduced in the previous
section relate to toric codes.  We again take $m = 2$ for simplicity although
everything extends without difficulty to larger $m$ as well.  The first 
observation is that since we are evaluating the monomials $x^e$ at points
$p_f$ in $(\Fqstar)^2$ (as in the introduction), the fact that primitive elements 
$\alpha$ for $\Fq$ satisfy $\alpha^{q-1} = 1$ shows that 
$$ e = e' \text{ in } (\Z/\langle q - 1\rangle)^2  \Rightarrow (p_f)^e = (p_f)^{e'}.$$

Hence, in a sense, it is probably \emph{even more natural} to consider the exponent 
vectors $e$ used in the evaluation mapping producing a toric surface code or
one of the generalized toric codes $C_S(\Fq)$ with $m = 2$
as elements of the affine plane $\mathcal{G}$ over $\ZZ/\langle q - 1\rangle$ 
rather than as vectors in $\ZZ^2$ or $\RR^2$.   Our first result is a variation on 
the fact noted in Theorem 3.3 of \cite{lschw} that lattice equivalent polytopes
give monomially equivalent toric codes, giving some additional evidence
for this claim.  This statement appears in \cite{cmo}.

\begin{thm}[\cite{cmo}, Theorem 1]
\label{monequiv}
Let $M$ be an invertible $2 \times 2$ matrix with entries in $\ZZ/\langle q-1\rangle$, 
$v$ be a fixed column vector with entries in $\ZZ/\langle q - 1\rangle$, and 
consider the affine mapping 
\begin{align}
T : (\ZZ/\langle q-1\rangle)^2 &\to (\ZZ/\langle q-1\rangle)^2\nonumber\\
              w &\mapsto M w + v\nonumber
\end{align}
Let $S_1$ and $S_2$ be subsets of $(\ZZ/\langle q - 1\rangle)^2$ such that
$S_2 = T(S_1)$.
Then the generalized toric codes $C_{S_1}(\Fq)$ and $C_{S_2}(\Fq)$
are monomially equivalent.  
\end{thm}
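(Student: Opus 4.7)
The plan is to exhibit an explicit column permutation and column rescaling taking the generator matrix of $C_{S_1}(\Fq)$ to that of $C_{S_2}(\Fq)$, which is exactly what monomial equivalence requires. Write $G_i$ for the standard generator matrix of $C_{S_i}(\Fq)$, with rows indexed by $e \in S_i$ and columns by $f \in (\ZZ/\langle q-1\rangle)^2$, and $(e,f)$-entry $\alpha^{\langle f,e\rangle}$. The preceding paragraph of Section~3 justifies reading the exponents $e$ as elements of $(\ZZ/\langle q-1\rangle)^2$, so the bilinear pairing $\langle f,e\rangle$ is well-defined modulo $q-1$ and $\alpha^{\langle f,e\rangle}$ is well-defined in $\Fq$.

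Next I would reindex the rows of $G_2$ by $e \in S_1$ via the bijection $e \mapsto e' = Me+v$ given by $T$, and then split the corresponding entry as
\[
\alpha^{\langle f, Me+v\rangle} \;=\; \alpha^{\langle M^\top f,\, e\rangle}\cdot \alpha^{\langle f,\, v\rangle}.
\]
The second factor $\alpha^{\langle f,v\rangle}$ depends only on the column $f$, so multiplying each column of $G_2$ by the nonzero scalar $\alpha^{-\langle f,v\rangle}$ yields a matrix whose $(e,f)$-entry is $\alpha^{\langle M^\top f,\, e\rangle}$. Since $M$ is invertible over $\ZZ/\langle q-1\rangle$, so is $M^\top$, and hence $f \mapsto M^\top f$ is a bijection of $(\ZZ/\langle q-1\rangle)^2$ onto itself. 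Reindexing the columns by $f' = M^\top f$ then turns this matrix into the matrix with $(e,f')$-entry $\alpha^{\langle f',e\rangle}$, which is precisely $G_1$.

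The net operations applied to $G_2$ are therefore a permutation of the columns (induced by $M^\top$) and a multiplication of each column by a nonzero scalar in $\Fq$ (coming from the translation $v$); by definition of monomial equivalence this establishes $C_{S_1}(\Fq) \cong C_{S_2}(\Fq)$ as desired.

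I do not expect a serious obstacle here: the only thing one has to check carefully is that all exponents are legitimately read modulo $q-1$ (so the split $\alpha^{\langle f, Me+v\rangle}=\alpha^{\langle M^\top f,e\rangle}\alpha^{\langle f,v\rangle}$ is valid) and that the maps $e \mapsto Me+v$ on rows and $f \mapsto M^\top f$ on columns are honest bijections of the index sets, both of which follow immediately from $M \in \mathrm{GL}_2(\ZZ/\langle q-1\rangle)$ together with the identity $\alpha^{q-1}=1$ noted at the beginning of Section~3.
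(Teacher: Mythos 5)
Your argument is correct and is essentially the paper's own proof, just written out in more detail: the paper likewise splits $\alpha^{\langle Me+v,f\rangle}=\alpha^{\langle v,f\rangle}\cdot\alpha^{\langle e,M^tf\rangle}$, observes that $M^t$ permutes the columns (the points of $(\Fqstar)^2$) and that the translation $v$ contributes only a nonzero scalar per column, and concludes monomial equivalence. Your explicit bookkeeping with the generator matrices $G_1,G_2$ and the row reindexing via $T$ is a faithful elaboration of the same decomposition.
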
 

\begin{proof}
The proof is essentially the same as that of Theorem 3.3 from \cite{lschw}.
The component of the vector ${\rm ev}(x^e)$ corresponding to 
$e\in S_1$ and $p_f \in (\Fqstar)^2$
is $\alpha^{\langle e,f\rangle}$.  Similarly, evaluating
$x^{Me + v}$, where $Me + v\in S_2$, we obtain
$$\alpha^{\langle Me + v,f\rangle} = \alpha^{\langle v,f\rangle} \cdot 
\alpha^{\langle e, M^t f\rangle}.$$    
Because it is assumed invertible, $M$ defines a permutation 
of $(\ZZ/\langle q-1\rangle)^2$, and similarly $M^t$ induces a permutation
of $(\Fqstar)^2$. Moreover, the translation vector induces different
constant multiples in each component of the evaluation of a monomial.
Hence the $C_{S_2}(\Fq)$ code is monomially equivalent to the $C_{S_1}(\Fq)$
code.
\end{proof}

The transformations $T$ described here form a group under composition, known
as the affine general linear group over $\ZZ/\langle q - 1\rangle$, and denoted
by ${\rm AGL}(2,\ZZ/\langle q - 1\rangle)$.  
If $S_2 = T(S_1)$ for some such $T$, the sets
are said to be ${\rm AGL}(2,\ZZ/\langle q - 1\rangle)$-equivalent.
Because $\det(M)$ can be any unit in $\ZZ/\langle q - 1\rangle$, not just 
$\pm 1$ as for invertible integer affine transformations, we obtain 
larger equivalence classes here than when we consider lattice equivalence
classes of sets $S$.  But the generalized toric codes for all $S$ in 
one of these equivalence classes are equivalent from the coding theory
perspective -- they have the same total weight enumerators, for instance.   

The following simple algebraic fact will play a key role in relating 
properties of toric codes to the properties of the affine plane over 
$\ZZ/\langle q - 1\rangle$.  The ring $\Fq[x,y]/\langle x^{q-1} - 1, y^{q-1} - 1\rangle$ 
is precisely the coordinate ring of the torus $(\Fqstar)^2$ over $\Fq$.
We will show that the geometry of the affine 
plane over $\ZZ/\langle q - 1 \rangle$ and the algebra of polynomial
functions on the torus are closely connected.  
We will now abandon the multiindex notation and write out monomials in two 
variables explicitly. 

\begin{thm}
\label{binomfact}
Let $\mathcal{G}$ be the affine plane over $R = \ZZ/\langle q - 1\rangle$, let $N$ be 
a factor of $q - 1$, and let $(0,0) \circ (a,b)$ be neighbors with $(a,b) = N(u,v)$ in $R^2$ for 
some $(u,v) \in B$.  Then the binomial $x^a y^b - 1$
factors in $\Fq[x,y]/\langle x^{q-1} - 1, y^{q-1} - 1\rangle$ 
into a product of $N$ distinct factors. 
\end{thm}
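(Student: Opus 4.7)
The plan is to reduce the claim to the familiar splitting of $T^N - 1$ over $\Fq$.

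The key observation I would use is that in the quotient ring $A := \Fq[x,y]/\langle x^{q-1}-1, y^{q-1}-1\rangle$, the image of a monomial $x^i y^j$ depends only on the class of the exponent vector $(i,j)$ modulo $q-1$. Consequently the hypothesis $(a,b) = N(u,v)$ in $R^2$ translates directly into the identity
\begin{equation*}
x^a y^b \;=\; (x^u y^v)^N \quad \text{in } A.
\end{equation*}

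Once this is in hand, the remainder of the argument is a single substitution. Since $N$ divides $q - 1$ and $\Fqstar$ is cyclic of order $q-1$, there is a unique subgroup $\mu_N \subset \Fqstar$ of order exactly $N$, consisting of the $N$ distinct $N$th roots of unity in $\Fq$. The polynomial identity $T^N - 1 = \prod_{\zeta \in \mu_N}(T - \zeta)$ in $\Fq[T]$ is preserved under any $\Fq$-algebra homomorphism; evaluating at $T = x^u y^v \in A$ therefore yields
\begin{equation*}
x^a y^b - 1 \;=\; \prod_{\zeta \in \mu_N}\bigl(x^u y^v - \zeta\bigr).
\end{equation*}
These $N$ factors are pairwise distinct in $A$, because distinct $\zeta \in \mu_N$ differ as nonzero elements of $\Fq$, so the same is true of the corresponding factors in $A$.

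I do not anticipate any serious obstacle: the argument collapses to one exponent computation in $R$ and one substitution. The hypothesis $(u,v) \in B$ and the neighbor condition $(0,0) \circ (a,b)$ play no explicit role in the mechanics of the factorization; they serve to make the statement non-degenerate, since if $N > 1$ divides $q-1$ then $N$ is a non-unit in $R$, forcing $N(u,v)$ to have both coordinates non-units and hence forcing $(a,b)$ to be a neighbor of $(0,0)$ in the first place.
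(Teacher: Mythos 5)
Your proposal is correct and follows essentially the same route as the paper's own proof: both reduce $x^a y^b - 1$ to $(x^u y^v)^N - 1$ using the exponent identity in the quotient ring and then substitute $x^u y^v$ into the complete splitting of $T^N - 1$ over $\Fq$ afforded by $N \mid q-1$. Your explicit remark that distinctness of the factors follows from distinctness of the $N$th roots of unity is a small but welcome addition that the paper leaves implicit.
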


Before proceeding to the proof, 
we note that in the case $r$ is a prime power, by Proposition~\ref{index},
the number $N$ is the 
number of lines through $(0,0)$ and $(a,b)$ in $\mathcal{G}$.  In other cases
the number of lines can be even larger.  I would like to thank Melda G\"or\"ur
for bringing examples like this one to my attention.  In the affine plane
over $\ZZ/\langle 6\rangle$, consider the neighbors $(0,0)$ and $(2,0)$.
We have $(2,0) = 2\cdot (1,0) = 2\cdot (1,3) = 2\cdot (4,3)$.  All three
vectors $(1,0),(1,3),(4,3)$ appear in invertible $2\times 2$ matrices over
$\ZZ/\langle 6\rangle$.  Hence they are all elements of $B$.  But this
says there are three distinct lines through these neighbor points, not two.  
On the other hand each of the vectors $(u,v) = (1,0), (1,3), (4,3)$ produces
a factorization of the binomial $x^2 - 1$ in $\F_7[x,y]/\langle x^6 - 1,y^6 - 1\rangle$
as in the statement of the theorem.

\begin{proof}
The integer $N$ is a factor of $q - 1$. Hence $\Fqstar$ contains $N$
distinct $N$th roots of unity and $u^N - 1$ factors completely into 
linear factors in $\Fq[u]$.  But then the same will be true for 
$x^a y^b - 1$ since $(a,b) = N(u,v)$ for some vector $(u,v) \in B$.
If $\alpha$ is a primitive element for $\Fq$, the factorization 
can be written explicitly as
\begin{equation}
\label{factexact}
x^ay^b - 1 = \prod_{\{j \mid (\alpha^j)^N = 1\}} (x^u y^v - \alpha^j).
\end{equation}
This establishes the theorem.
\end{proof}

We are now ready to see some first consequences for toric codes.  

\begin{cor}
\label{distbd} 
Suppose the set $S$ used to produce the generalized toric code
$C_S(\Fq)$ contains $(0,0)$ and $(a,b)$ as in the statement of Theorem~\ref{binomfact}
(or more generally any two elements of $(\ZZ/\langle q - 1\rangle)^2$ that 
differ by $N(u,v)$ for some $(u,v) \in B$).  Then the minimum distance
of $C_S(\Fq)$ satisfies 
$$d(C_S(\Fq)) \le (q - 1)^2 - N(q - 1).$$
\end{cor}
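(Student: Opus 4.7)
The plan is to exhibit an explicit codeword of weight exactly $(q-1)^2 - N(q-1)$ in $C_S(\Fq)$. Since both exponent vectors $(0,0)$ and $(a,b)$ belong to $S$, the vector
$$c = \mathrm{ev}(x^a y^b) - \mathrm{ev}(x^0 y^0) = \mathrm{ev}(x^a y^b - 1)$$
is a legitimate element of $C_S(\Fq)$. Its weight equals $(q-1)^2$ minus the number of torus points $p_f \in (\Fqstar)^2$ on which the function $x^a y^b - 1$ vanishes, so the bound will follow once I count those zeros.

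To count zeros, I would invoke the factorization of Theorem~\ref{binomfact}, which writes $x^a y^b - 1 = \prod_{j}(x^u y^v - \alpha^j)$ with $N$ distinct factors corresponding to the $N$-th roots of unity in $\Fqstar$. It then suffices to show (i) that each factor $x^u y^v - \alpha^j$ vanishes at exactly $q-1$ points of the torus, and (ii) that the zero loci of distinct factors are disjoint. For (ii), a common zero would force $\alpha^j = \alpha^{j'}$, hence $j \equiv j' \pmod{q-1}$. For (i), writing $p_f = (\alpha^{f_1}, \alpha^{f_2})$ reduces the equation $x^u y^v = \alpha^j$ to the linear congruence $uf_1 + vf_2 \equiv j \pmod{q-1}$ in $(\ZZ/\langle q-1\rangle)^2$.

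The main technical point, and the step I expect to require the most care, is verifying that the map $\phi\colon (f_1,f_2) \mapsto uf_1+vf_2$ is a surjection of $(\ZZ/\langle q-1\rangle)^2$ onto $\ZZ/\langle q-1\rangle$ with all fibers of size $q-1$. Surjectivity is exactly the hypothesis $(u,v) \in B$: by condition $(E_3)$ (or equivalently the discussion immediately after $(E_1)$--$(E_4)$), $(u,v) \in B$ forces the existence of $s,t$ with $su+tv = 1$, so $1$ lies in the image of $\phi$, and then by linearity the image is all of $\ZZ/\langle q-1\rangle$. Each fiber then has cardinality $(q-1)^2/(q-1) = q-1$, giving (i).

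Combining the three ingredients, $x^a y^b - 1$ vanishes at exactly $N(q-1)$ points of $(\Fqstar)^2$, so $c$ has weight $(q-1)^2 - N(q-1)$. Finally, $c$ is a nonzero codeword: since $(0,0)$ and $(a,b)$ are \emph{distinct} elements of $(\ZZ/\langle q-1\rangle)^2$, the monomials $x^0 y^0$ and $x^a y^b$ evaluate to distinct functions on the torus, so $c \ne 0$. Therefore $d(C_S(\Fq)) \le \mathrm{wt}(c) = (q-1)^2 - N(q-1)$, as claimed. The more general form of the hypothesis, where $(0,0)$ and $(a,b)$ are replaced by arbitrary $e_1, e_2 \in S$ with $e_2 - e_1 = N(u,v)$, reduces to the case above after multiplying through by $\mathrm{ev}(x^{e_1})$, which is a unit coordinatewise on the torus.
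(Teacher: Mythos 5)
Your proposal is correct and follows essentially the same route as the paper: exhibit the codeword $\mathrm{ev}(x^ay^b-1)$, use the factorization \eqref{factexact} to count $N(q-1)$ zeros on the torus via the $N$ disjoint zero sets of the factors $x^uy^v-\alpha^j$, and reduce the general case by factoring out the unit $x^{e_1}$. You simply supply more detail than the paper at the steps it labels as clear, namely the fiber count for the congruence $uf_1+vf_2\equiv j$ and the disjointness of the zero loci.
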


We assume nothing about other points on the lines containing $(0,0)$ and 
$(a,b)$.  Note that this does not contradict the statement of Proposition 3.4
of \cite{lschw} because that article considers only toric codes
from polytopes and we are considering generalized toric codes here.  
The analogous pairs of points here, yielding codes where the bound has $N = 1$,
are those pairs differing by an element of 
order $q - 1$ in $(\ZZ/\langle q - 1\rangle)^2$, or equivalently an 
element of the set $B$.

\begin{proof}  First, if $S$ contains $(0,0)$ and $(a,b)$, then the $\Fq$-span of 
the monomials corresponding to $S$ contains all linear combinations of 
$1$ and $x^ay^b$. Therefore, from \eqref{factexact}, we obtain a codeword 
containing zero entries at positions corresponding to each of the $(x,y) \in (\Fqstar)^2$ with 
$$x^u y^v - \alpha^j = 0$$
as $\alpha^j$ runs through the $N$th roots of unity in $\Fqstar$.  There
are exactly $q - 1$ such points for each $j$.  Moreover the sets of zeroes
are clearly pairwise disjoint.  Hence that codeword has weight $(q - 1)^2 - N(q - 1)$,
and we have an upper bound for $d(C_S(\Fq))$ as claimed.  The more general case
given in parentheses in the statement of the Corollary follows from this.  If
$x^a y^b$ and $x^c y^d$ are in $S$ and $(c - a,d - b) = N(u,v)$ for some $(u,v) \in B$, then 
$$x^c y^d - x^a y^b = x^a y^b (x^{c - a} y^{d - b} - 1).$$
The monomial $x^a y^b$ is nonzero at all points in $(\Fqstar)^2$ and we 
proceed as before with the other factor.
\end{proof}

A direct consequence of this is the following statement about a related
configuration of points.  

\begin{cor}
\label{parallelograms}
Suppose the $S$ used to produce the generalized toric code $C_S(\Fq)$
contains the vertices of a ``parallelogram'' -- that is four points of the form
$$(0,0), (a,b), (c,d), (a+c,b+d),$$
where the sum is taken in $(\ZZ/\langle q - 1\rangle)^2$ (or more generally
something obtained from this by translating by a fixed vector in 
$(\ZZ/\langle q - 1\rangle)^2$).  Assume that $(a,b) = N_1(a',b')$ for $(a',b') \in B$,
and $(c,d) = N_2(c',d')$ for $(c',d') \in B$ in $(\ZZ/\langle q - 1\rangle)^2$.
If $(a',b')$ and $(c',d')$ generate the additive group $(\ZZ/\langle q - 1\rangle)^2$, then 
$$d(C_S(\Fq)) \le (q - 1)^2 - (N_1+N_2)(q - 1) + N_1N_2.$$
\end{cor}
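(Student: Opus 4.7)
The plan is to exhibit an explicit low-weight codeword in $C_S(\Fq)$ by multiplying the two binomials supplied by Theorem~\ref{binomfact} and then counting its zero locus on $(\Fqstar)^2$ via inclusion-exclusion. Specifically, I would consider the polynomial
$$(x^a y^b - 1)(x^c y^d - 1) = x^{a+c} y^{b+d} - x^a y^b - x^c y^d + 1,$$
which is an $\Fq$-linear combination of the four monomials attached to the parallelogram vertices in $S$. Its image under the evaluation map is therefore a codeword of $C_S(\Fq)$, so its weight gives an upper bound for $d(C_S(\Fq))$.

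The weight equals $(q-1)^2$ minus the number of $(x,y)\in(\Fqstar)^2$ at which at least one of the two factors vanishes. To count these, I would introduce new variables $u_1 = x^{a'}y^{b'}$ and $u_2 = x^{c'}y^{d'}$, so that the conditions $x^ay^b = 1$ and $x^cy^d = 1$ become $u_1^{N_1}=1$ and $u_2^{N_2}=1$ respectively. Writing $x = \alpha^s$ and $y = \alpha^t$, this change of variables corresponds to the action on exponents of the matrix $M = \begin{pmatrix} a' & b' \\ c' & d' \end{pmatrix}$ over $\ZZ/\langle q-1\rangle$. The hypothesis that $(a',b')$ and $(c',d')$ generate $(\ZZ/\langle q-1\rangle)^2$ additively is exactly the statement that $\det M$ is a unit, so $(x,y)\mapsto(u_1,u_2)$ is a bijection of $(\Fqstar)^2$ onto itself.

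In the $(u_1, u_2)$ coordinates the counts become immediate: $u_1^{N_1}=1$ has $N_1(q-1)$ solutions (any $N_1$-th root of unity for $u_1$, anything for $u_2$), $u_2^{N_2}=1$ has $N_2(q-1)$ solutions, and the simultaneous system has $N_1 N_2$ solutions. Inclusion-exclusion then yields $(N_1+N_2)(q-1) - N_1 N_2$ zeros of the codeword and weight $(q-1)^2 - (N_1+N_2)(q-1) + N_1 N_2$, as claimed. For the translated-parallelogram version I would simply multiply the polynomial above by $x^e y^f$; since this monomial is a unit on the torus, the zero set is unchanged.

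The main obstacle is the intersection count: recognizing that the generating hypothesis on $(a',b')$ and $(c',d')$ is precisely what is needed to turn the coupled system $x^ay^b = x^cy^d = 1$ into a product of two decoupled root-of-unity conditions via the substitution above. Once that identification is made, the individual factor counts and the inclusion-exclusion step follow without difficulty from Theorem~\ref{binomfact} and the proof of Corollary~\ref{distbd}.
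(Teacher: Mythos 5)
Your proposal is correct and follows essentially the same route as the paper: exhibit the codeword coming from $(x^ay^b-1)(x^cy^d-1)$, count its zeros by inclusion-exclusion using Theorem~\ref{binomfact} for each factor, and use the generating hypothesis to get exactly $N_1N_2$ common zeros. Your explicit change of variables $u_1 = x^{a'}y^{b'}$, $u_2 = x^{c'}y^{d'}$ (justified by the determinant of the exponent matrix being a unit) is a nice elaboration of the step the paper only asserts, namely that each pair of curves $x^{a'}y^{b'} = \alpha^j$, $x^{c'}y^{d'} = \alpha^k$ meets in a single point of the torus.
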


\begin{proof}
Among the linear combinations of the monomials corresponding to the points
in $S$ are combinations that factor as
$$(x^a y^b - 1)(x^c y^b - 1).$$
Because of the hypothesis on $(a',b')$ and $(c',d')$ the curves 
$x^{a'} y^{b'} - \alpha^j = 0$ and $x^{c'}y^{d'} - \alpha^k = 0$ always
intersect in a single point in $(\Fqstar)^2$.  
Applying Theorem~\ref{binomfact} and the proof of Corollary~\ref{distbd},
this polynomial has $(N_1 + N_2)(q - 1) - N_1N_2$ zeroes in $(\Fqstar)^2.$
\end{proof}

We leave it to the reader to formulate and prove a result describing the possibilities
that can occur when $(a',b')$ and $(c',d')$ fail to generate the additive group 
$(\ZZ/\langle q - 1\rangle)^2$.  
In Example~\ref{BadConfig} below, we will see one example of what can happen.

\begin{exm}
\label{Z8bds}
Consider generalized toric codes $C_S(\F_9)$.  Here $q = 9$ so $q - 1 = 8$
and the relevant affine plane is the one whose properties were studied in 
Example~\ref{Z8Plane}.  From Corollary~\ref{distbd}, we obtain, for instance
that if $S$ contains $(0,0)$ and $(4,0)$ (or $(1,0)$ and $(5,0)$, etc.)
then 
$$d(C_S(\F_9)) \le 64 - 4\cdot 8 = 32.$$
From \cite{g}, the best possible $d$ for a code with $n = 64$ and 
$k = 2$ over $\F_9$ has $d = 57$.  Hence such $C_S(\F_9)$ can be far from optimal.
Similarly if $S$ contains any two points differing by an element of order
$N = 2$ in $(\ZZ/\langle 8\rangle)^2$, then 
$$d(C_S(\F_9)) \le 64 - 2\cdot 8 = 48.$$  
We can summarize the pattern here by saying that the presence of 
neighboring points in $S$ tends to reduce $d(C_S(\Fq))$ directly
in proportion to the number of distinct lines through the neighbors.
$\diamondsuit$
\end{exm}

By part (3) of Theorem~\ref{ringprops}, there will be analogous more or
less ``bad'' configurations of pairs or other small numbers of points 
that must be avoided in $S$ in order to produce generalized toric codes over 
$\Fq$ with good minimum distance.  
Here are several examples illustrating these claims.  

\begin{exm}
\label{Collinear}
In \cite{l} we discussed several cases where, even though $S_0$ contains
``gaps'' (that is, if $P = {\rm conv}(S_0)$, then some points of $P\cap \ZZ^2$
are not contained in $S_0$) the generalized code $C_{S_0}(\Fq)$ behaves like a toric code
$C_P(\Fq)$ where $P$ contains a whole line segment and there are linear combinations
of the corresponding monomials that factor completely.  The same kind 
of thing can now be recognized and predicted in many additional examples.  For instance
consider the set $S_0 = \{(0,0),(3,1),(1,3)\}$ with $q = 9$.  Even though
these points are not collinear as elements of $\ZZ^2$, they \emph{are} collinear
in the affine plane over $\ZZ/\langle 8\rangle$ because $(1,3) = 3\cdot (1,3)$.  Moreover, 
if $\beta_i$ are any distinct elements of $\F_9$ with $\beta_1 + \beta_2 + \beta_3 = 0$, then
$$x y^3 + (\beta_1\beta_2 + \beta_1\beta_3 + \beta_2\beta_3) x^3 y + \beta_1\beta_2\beta_3$$
factors as
$$(x^3 y + \beta_1)(x^3 y + \beta_2)(x^3 y + \beta_3)$$
in $\F_9[x,y]/\langle x^8 - 1, y^8 - 1\rangle$.  
This implies that if $S$ contains any set ${\rm AGL}(2,\ZZ/\langle 8\rangle)$-equivalent to 
$S_0$, then generalized toric code satisfies
$$d(C_S(\F_9)) \le 64 - 3\cdot 8 = 40.$$   
The behavior seen in cases like this one, and the similar factorization 
over $\F_8$ from Example 5.6 of \cite{ls}, becomes much less mysterious with the 
viewpoint provided by the finite geometry. $\diamondsuit$
\end{exm}

In the following examples, we will consider codes over $\F_9$ and we will use
a primitive element $\alpha$ for this field given as a root of $u^2 + u + 2 = 0$.

\begin{exm}
\label{Frob}
Consider $S_0 = \{(1,0),(0,1),(3,6)\}$
in the affine plane over $\ZZ/\langle 8\rangle$.  These points are not collinear, but
replacing $x$ by $x^9$ (which defines the same function on $(\F_9^{\,*})^2$) and $y$ by $y^9$, 
we obtain a factorization of a linear combination of $x^9, y^9, x^3 y^6$ as follows:
$$x^9 + y^9 + x^3 y^6 = (x + \alpha y)^3 (x + \alpha^3 y)^3 (x + \alpha^4 y)^3$$
So if $S$ contains any configuration ${\rm AGL}(2,\ZZ/\langle 8\rangle)$-equivalent 
to $S_0$, then 
$$d(C_S(\F_9)) \le 64 - 3\cdot 8 = 40.$$
Whenever $q = p^r$ for $r > 1$, the Frobenius automorphism of the field $\Fq$ will 
produce analogous sets of exponents for which the $C_S$ codes have perhaps unexpectedly 
small minimum distance. 
$\diamondsuit$
\end{exm}

Moreover, those ``bad'' configurations depend strongly on $q$ because the 
geometries of $(\ZZ/\langle q - 1\rangle)^2$ also depend strongly on $q$,
not just on the locations of the points from $S$ in $\ZZ^2$ or $\RR^2$.

\begin{exm}
\label{BadConfig}
Consider the ``trapezoid'' 
$S_0 = \{(0,0),(3,0),(1,4),(2,4)\}$, viewed as a subset of the 
affine planes over $\ZZ/\langle 6\rangle, \ZZ/\langle 7\rangle$, and
$\ZZ/\langle 8\rangle$ in turn.  The corresponding toric codes $C_{S_0}(\Fq)$ have
parameters as follows:
\begin{eqnarray*}
C_{S_0}(\F_7)  && [36,4,18]\\
C_{S_0}(\F_8)  && [49,4,36]\\
C_{S_0}(\F_9)  && [64,4,40].
\end{eqnarray*}
As a result, the presence of $S_0$ (or, by Theorem~\ref{monequiv}, any 
other configuration $S_1$ obtained from $S_0$ by an invertible affine
transformation of the corresponding plane) in a set $S$ imposes 
different ``penalties'' $n - d$ depending on $q$.  The penalty is much 
larger for $q = 7$ or $q = 9$ than it is for $q = 8$.  

The explanation for this behavior comes from the finite geometries.
In the geometry over $\ZZ/\langle 6\rangle$, the points $(0,0)$ and 
$(3,0)$ are neighbors with $(3,0) = 3\cdot (1,0)$ and $(1,0) \in B$. 
(These two points are actually contained in 5 different lines.)  We have
$d(C_{S_0}(\F_7)) \le 36 - 3\cdot 6 = 18$ by considering explicit
factorizations of $x^3 - 1$ as in Theorem~\ref{binomfact}.

In the plane over $\ZZ/\langle 8\rangle$, on the other hand, the 
situation is more subtle.  First, we note that in $(\ZZ/\langle 8\rangle)^2$, 
the configuration $S_0$ is actually also a 
``parallelogram.'' This is true since $(2,4) - (0,0) = (2,4)$
and $(3,0) - (1,4) = (2,4)$.  But we also have 
$(2,4) = 2(1,2)$ and the vectors $(1,4) = (1,4) - (0,0)$ and $(1,2)$ do not 
generate all of $(\ZZ/\langle 8\rangle)^2$.  As a result, the statement of 
Theorem~\ref{parallelograms} does not apply and while the bound is still
true, it is not sharp.  We can understand what is happening in this example algebraically 
by working in $\F_9[x,y]/\langle x^8 - 1, y^8 - 1\rangle$, the coordinate 
ring of the torus $(\F_9^\ast)^2$.  One minimum-weight word in the $C_S(\F_9)$ code comes from 
evaluating
$$
\alpha^7 + \alpha^2 x^3 + \alpha^6 x y^4 + \alpha^3 x^2 y^4
\equiv \alpha^2 (y^2 + x)(\alpha^4 y^2 + x)(\alpha y^4 + x)
$$
(recall that $y^8 \equiv 1$).  This is a maximally
factorizable polynomial in the span of $1, x^3, xy^4, x^2y^4$.
The number of zeroes in $(\F_9^\ast)^2$ turns out to 
be $3\cdot 8 = 24$ in this case, since the curves 
$$y^2 + x = 0, \quad 2y^2 + x = 0, \quad \alpha y^4 + x = 0$$
defined by the factors do not intersect at $\F_9$-rational points in the torus.
$\diamondsuit$
\end{exm}

We believe that the lesson of examples like these is that toric codes over 
fields such as $\F_7$ and $\F_9$ are not automatically bad, but that there 
are certain configurations of points special to the field $\Fq$
that must be avoided in $S$ in order to 
find codes $C_S(\Fq)$ with good $d$.   Here is an example where this
approach was followed to try to find a good code.

\begin{exm}
\label{F9success}
The following $S$ giving a nearly optimal $C_S(\F_9)$ code with parameters
$[64,8,45]$ was found by a randomized heuristic search at the MSRI-UP 2009
undergraduate research program by then-students
Alejandro Carbonara, Juan Murillo, and Abner Ortiz:
$$S = \{(0,4),(1,1),(2,0),(2,3),(2,5),(3,7),(5,2),(7,4)\}.$$
According to \cite{g}, the best known $d$ for this $n$ and $k$ over $\F_9$
is $d = 46$.  It is not difficult to check that all but four of the pairwise
difference vectors $(a,b) - (c,d)$ for $(a,b),(c,d)\in S$ are contained in 
the set $B$ considered here (for the field $\F_9$).  
Moreover the four that are not in $B$, such as $(2,0) - (0,4) \equiv (2,4)$, 
are all elements of order 4 in $(\ZZ/\langle 8\rangle)^2$. 
So the upper bound $d \le 48$ from Corollary~\ref{distbd} or Example~\ref{Z8bds} applies.
This is a case where taking one pair of the points in $S$ gives a code
with $d = 48$, but then adding six more points decreases $d$ by only an additional $3$.

Another observation is that the set of differences $(a,b) - (c,d)$
contains only two pairs of equal vectors (there are $26$ different vectors in 
the set of differences).  The two pairs of equal vectors consist of vectors
in $B$.  Hence there are two ``parallelograms'' contained in 
$S$, and Corollary~\ref{parallelograms} applies with $N_1 = N_2 = 1$.
This gives a less tight upper bound of $d \le (9 - 1)^2 - 2(9 - 1) + 1 = 49$.   
$\diamondsuit$  
\end{exm}

\section{Final Comments}

Computations done by my student Lauren Buckley at Holy Cross in 2014 
show that the value $d = 45$ from Example~\ref{F9success}
is \emph{optimal for generalized toric codes} with
$n = 64$ and $k = 8$ over $\F_9$.
But the method required a detailed (and somewhat tedious) case-by-case analysis
and we will not attempt to present
the details here.  The idea was to enumerate all the 
${\rm AGL}(2,\ZZ/\langle 8\rangle)$-equivalence classes of base sets $S_0$ with 
$|S_0| = 4$, and then consider all possible
ways to ``build up'' to $k = 8$ by adding 4 additional points to one $S_0$
in each class.  As $k$ increased, it quickly became impossible to avoid 
some sets dropping $d$ to $45$ or less.  The examples presented
above were all used to recognize when this happened. Needless to say, 
though, we would like to have a better argument to show $d \le 45$ in this case.

We will conclude this note by making some further observations regarding
the potential of the generalized toric code construction for producing
really good codes (say better than those found by other methods and recorded
in the database \cite{g}).  As we mentioned previously, this construction
has been most successful over $\F_8$, as shown for example in the 
new codes found in \cite{bk}.  The underlying reason for this should be
somewhat clear by now -- we believe that this is simply a reflection of
the fact that the underlying geometry in this case comes from the 
the ring $R = \ZZ/\langle 8 - 1\rangle$ which is isomorphic to the field $\F_7$.  
Since this $R$ has no zero divisors, there are no neighboring points in the affine 
plane $R^2$.  All of the 
properties in (4) of Theorem~\ref{ringprops} hold in this case, so 
there are many fewer ``bad configurations'' to avoid in searches
for good codes.  

While there are isolated examples like the one in Example~\ref{F9success}
over fields $\Fq$ for which $\ZZ/\langle q - 1\rangle$ is not also a field, and
even a few others where optimal codes have been obtained as generalized
toric codes, we believe that these cases will be much rarer and more difficult
to find.  

The best next case to look at will probably be codes over
$\F_{32}$ and more generally the other cases where $p$ is a Mersenne 
prime and $p + 1 = 2^r$.  But of course those cases are relatively rare
and they lead to large fields where virtually nothing is known yet
about optimal codes.  

\bibliographystyle{amsplain}

\end{document}